\DeclareRobustCommand*\cal{\@fontswitch\relax\mathcal}
\pgfplotsset{compat=1.10}
\DeclareRobustCommand*\cal{\@fontswitch\relax\mathcal}
\newtheorem{definition}{Definition}
\newtheorem{remark}{Remark}
\newtheorem{theorem}{Theorem}
\newtheorem{lemma}{Lemma}
\begin{document}

% \title[Article Title]{Article Title}
\title[Market Equilibria With Buying Rights]{Market Equilibria With Buying Rights}

%%=============================================================%%
%% GivenName	-> \fnm{Joergen W.}
%% Particle	-> \spfx{van der} -> surname prefix
%% FamilyName	-> \sur{Ploeg}
%% Suffix	-> \sfx{IV}
%% \author*[1,2]{\fnm{Joergen W.} \spfx{van der} \sur{Ploeg} 
%%  \sfx{IV}}\email{iauthor@gmail.com}
%%=============================================================%%

\author*[1]{\fnm{Martin} \sur{Loebl}}\email{loebl@kam.mff.cuni.cz}

%\author[1]{\fnm{David} \sur{Sychrovsk\'{y}}}\email{sychrovsky@kam.mff.cuni.cz}
% \equalcont{These authors contributed equally to this work.}

\author[2]{\fnm{Anetta} \sur{Jedli\v{c}kov\'{a}}}\email{anetta.jedlickova@fhs.cuni.cz}
% \equalcont{These authors contributed equally to this work.}

\author[3]{\fnm{Jakub} \sur{\v{C}ern\'{y}}}\email{cerny@disroot.org}

% \equalcont{These authors contributed equally to this work.}

\affil[1]{\orgdiv{Department of Applied Mathematics, Faculty of Mathematics and Physics}, \orgname{Charles University}, \orgaddress{\street{Malostranské nám. 25}, \city{Prague}, \postcode{11800},  \country{CZE}}}
 
\affil[2]{\orgdiv{Department of Philosophy, Faculty of Humanities}, \orgname{Charles University}, \orgaddress{\street{Pátkova 2137/5}, \city{Prague}, \postcode{18200}, \country{CZE}}}

\affil[3]{\orgdiv{Department of Industrial Engineering and Operations Research}, \orgname{Columbia University}, \orgaddress{\street{500 W. 120th Street}, \city{New York}, \postcode{10027}, \state{NY}, \country{USA}}}

% \author{Jakub Černý}
% \affiliation{
%   \institution{Nanyang Technological University}
%   \city{Singapour}
%   \country{Singapour}}
% \email{cerny@disroot.org}

% \author{Anetta Jedli\v{c}kov\'{a}}
% \affiliation{
%   \institution{Charles University}
%   \city{Prague}
%   \country{Czechia}}
% \email{anetta.jedlickova@fhs.cuni.cz}

% \author{Martin Loebl}
% \affiliation{
%   \institution{Charles University}
%   \city{Prague}
%   \country{Czechia}}
% \email{loebl@kam.mff.cuni.cz}

% \author{David Sychrovsk\'{y}}
% \affiliation{
%   \institution{Charles University}
%   \city{Prague}
%   \country{Czechia}}
% \email{sychrovsky@kam.mff.cuni.cz}

%%==================================%%
%% Sample for unstructured abstract %%
%%==================================%%

\abstract{
We embed buying rights into a (repeated) Arrow–Debreu model to study the long-term effects of regulation through buying rights on arising inequality. Our motivation stems from situations that typically call for regulatory interventions, such as rationing, namely, distribution crises in which demand and supply are persistently misaligned. In such settings, scarce resources tend to become increasingly concentrated among more affluent individuals, while the needs of the broader population remain unmet. While fully centralized distribution may be logistically or politically unfeasible, issuing buying rights offers a more practical alternative: they can be implemented digitally, e. g.,  via tokens traded on online platforms, making them significantly easier to administer. We model a scenario in which a regulator periodically distributes buying rights with the aim of promoting a more equitable allocation. Our contributions include (i) the definition of the (iterated) market where in each round the buying rights are distributed and then traded alongside the resource, (ii) the approximation algorithm of the market-clearing prices in every round, and (iii) the upper bound on \textit{frustration} -- a notion conceptually similar to the Price of Anarchy, but for systems regulated through buying rights, defined as the arising loss in fairness the individual buyers have to take when the distribution is handled via the market.
}

\keywords{Arrow–Debreu model, multi-round trading environment, rationing, fairness}

\maketitle

\section{Introduction}
\label{sec.intro}
Crises of distribution arise when a vital resource becomes severely limited. Such crises can occur for various reasons, such as natural disasters, wars, or economic instabilities. Different mechanisms can be deployed to distribute resources in these situations. Two approaches may be considered extremes: a centralized distribution by delegated authorities and a fully decentralized distribution through intervention-free markets. Markets have the potential to distribute goods flexibly and reliably among many buyers and sellers~\cite{cripps2006efficiency}. However, the free market has its drawbacks. In the absence of sufficient supply, scarcity can lead to significant price increases, with some traders attempting to acquire more than their \enquote{fair} share at the expense of others. This situation benefits the most powerful or well-connected individuals or organizations, leaving the less fortunate with limited or no access to vital resources. On the other hand, centralized distribution by authorities can ensure a fairer division of resources based on pre-established rules. However, resources are often not owned by the central authority, and, in addition, this approach comes with economic and time-related inefficiencies~\cite{moroney1997relative}.

We focus specifically on the stable period of most severe distribution crises, when there is a significant mismatch between the claim and a steady, albeit small, resupply of goods and money over an extended period of time.  The setting also requires to have a benchmark ``fair'' centralized allocation, given the needs of the buyers and the supply of the suppliers. 
We further assume that the distribution of the resource according to the benchmark allocation is not directly achievable by the central authority, but the authority can implement periodic buying-right distribution policies. The aim of the authority is to move the distribution of the good indirectly towards the benchmark allocation. These features of a distribution crisis are realistic; for example, they have similarities to the COVID-19 pandemic crisis that motivated this investigation.%\footnote{The initial results this paper is based on were supported by a project of the Czech Ministery of the Interior COVID Call.} %[reference anonymized].}% \cite{JLS}.}

To combine the advantages of both the free market and centralized approaches, Martin Loebl formulated the iterative hybrid market mechanism with buying rights (see \cite{JLS}). In~\cite{JLS}, the authors define the iterative distribution system, where each iteration consists of two stages. The system, called {\em Crisis}, has been further developed and analyzed in \cite{CFLS} using numerical methods. In each iteration, the central authority first distributes buying rights to the available amount of critical commodity among the buyers in a desirable manner. This first stage exemplifies a \enquote{claim problem} as studied in the literature on axiomatic resource allocation~\cite{T}. Unlike traditional interpretations where conflicting claims arise from legal debt, in our case, claims represent buyers' \enquote{needs} for the critical resource. The authority divides the amount of the resource into buying rights of the same total quantity, rather than directly into portions of the commodity, and the system moves to the second stage.

The second stage is a market where two types of goods are traded through money: the critical commodity, only sold by sellers, and the buying rights, only sold by buyers. This stage can be implemented using various market models, but a crucial restriction must hold: {\em At the end of each market, each buyer must own at least as many rights as the amount of the commodity they possess.}
We also consider a more restricted setting, called {\em Restricted Crisis} in which {\em the money obtained by the sales of rights can only be used to buy the commodity, and only in later iterations.} The restricted setting adds an advantage to active buyers, who can buy more commodity earlier than remaining buyers. 
All rights are eliminated at the end of each market and the commodity acquired by buyers is consumed according to their needs.

To accentuate the desired centralized distribution perspective prioritized during crises, we refer to the division rules used in the first stage of the extended market to distribute buying rights as the {\em (rights) distribution mechanisms}. In crises, these rules are often based on societal and ethical preferences and aim to divide the rights more equitably. Our goal is to minimize the role of the central authority while increasing equity compared to unregulated markets. 
The aim is to broaden the distribution of the essential commodity to buyers to whom the central authority intends to allocate goods, but is unable to do so because of their limited financial resources. This is achieved by enabling buyers to accumulate funds through the sale of their buying rights during trading rounds. At the same time, more active buyers need to pay extra for additional rights. Hence, the proposed multi-round trading mechanism can be viewed as an 'autonomous taxation system'. 

The restricted setting in which money obtained from selling rights can be used only in the subsequent markets gives to the active buyers the advantage, important in a crisis, to obtain the resource earlier than passive buyers. The proposed system is also beneficial for sellers since it broadens the trading.

\subsection{Preliminaries}
Iterative market mechanism Crisis ($\mathbb{C}$) consists of a sequence of Markets, $\mathbb{C}=(\mathbb{M}^{\tau})_{\tau \leq T}$.There is one scarce resource called {\it Good}. The two other commodities, representing funds and the right to purchase Good, will be referred to as {\it Money}\footnote{For a similar treatment of Money as a commodity in redistribution markets, see \cite{DKA}.}
and {\it Right}. In this paper, the commodities of Good and Right are {\em indivisible}, while Money is a {\em divisible} commodity. In this trading environment, traders form two disjoint sets of {\it sellers} and {\it buyers}. Sellers engage in selling Good, while buyers engage in selling Right, buying Right, or purchasing Good. 

Each single-round Market of Crisis $\mathbb{C}$ has two stages. The first stage begins with the sellers declaring the amount of Good for sale and the buyers declaring their {\it Claims} for Good. The total Claim may exceed the total offer.  The authority distributes Right of the same total quantity as the total declared amount of Good among the buyers.
At the beginning of each Market, each seller receives the declared amount of Good and each buyer receives an endowment of Money. 

In the second stage called {\em Trading}, the traders exchange their \emph{initial endowments} of Good, Right, and Money. Sellers are restricted to selling Good, while buyers are allowed to buy Good and Right and sell Right but cannot sell Good. A key restriction is: \emph{at the end of each Market, each buyer must have at least as much Right as Good}. In other words, a unit of Right is required to legally acquire a unit of Good. 

In the {\em Restricted Crisis}, it is required in addition that {\em Money obtained by the sales of Right can only be used to buy Good, and only in later Markets.}

At the end of each Market, (1) buyers consume all the obtained Right, (2) buyers consume all the obtained Good and keep the remaining Money for the next Market, and (3) sellers only sell Good and consume the obtained Money. Hence, the Markets in the sequence forming a Crisis are interdependent. 

In this paper, we however consider the case of {\em myopic buyers} who optimize each single-round Market of the Crisis $\mathbb{C}$. 
Markets of the Crisis with myopic buyers can be considered independently, and we devote next two sections to the study of the single-round Market. 
The independence of single-round Markets is not valid for Restricted Crises with myopic buyers, by the nature of the additional restriction. 

To evaluate the effectiveness of this iterative system, we use the same measure as in~\cite{JLS} and \cite{CFLS} called \enquote{frustration}. Frustration is defined as the normalized difference between the amount of Right assigned to a buyer and the actual amount of Good purchased by the buyer (normalized by the  amount of Right assigned to a buyer). Hence, frustration is by definition at most 1.  
This measure captures the degree to which the allocation of the benchmark authority is violated by the outcome of the distribution. Without introducing buying rights, frustration may reach 1. 

\subsection{Main Contributions}
In this work, we present and analyze the multiround trading environment (Restricted) Crisis, introduced in \cite{JLS}, to distribute a scarce good from myopic sellers to myopic buyers, who optimize each round, with heterogeneous wealth levels during crises. 
% \subsection{Main results}
Our results are derived with two assumptions. These assumptions are realistic since we focus specifically on the `stable' period of most severe distribution crises. The initial and terminal periods of distribution crises are our future work.
{\em First,} we study the mechanism under the assumption that agents are myopic: they wish to maximize their utility in every individual market. This is a realistic assumption, since in the middle stable part of a crisis, strategic planning is hard.  
{\em Second,} we assume that buyers (for example hospitals) have practically unlimited amounts of money in each individual market, and use their individual sub-additive utility function to decide how to spend them. This setting came up as more practical than an upper bound on the amount of money buyers have, in the COVID project mentioned earlier, which started this research. In fact, the analysis is easier in the latter case. 

Under these assumptions, we show in Section \ref{sec.alg} that there is an efficient algorithm which approximates an optimal solution of each Market (see {\bf Theorem \ref{thm.mmain}}). The algorithm is analogous to the standard approximation algorithm for the Arrow-Debreu model described in Chapter 5 of \cite{nisan2007algorithmic}). Its analysis is, however, novel and more complicated in several ways, in particular considering both additive and sub-additive utilities.

In the last section, we study the effect of Crisis on the frustration of the buyers. Our second positive result (see {\bf Theorem \ref{thm.pot}}) demonstrates that when Right is used, {\em potential frustration} in each Market of the Crisis is always capped at 1/2. This is optimum since we also provide examples of Markets where the potential frustration of one buyer is equal to 1/2. 
We further analyze the Restricted Crisis and show ({\bf Theorem~\ref{thm.wp}}), under further assumptions, that frustration is also capped at 1/2. The (potential) frustration of 1/2 compares favorably with the frustration of the free market, which can be as high as 1. 

\subsection{Acknowledgment} This project receives funding from the Horizon EU Framework Programme under Grant Agreement No. 101183743. We would like to thank to David Sychrovský for many helpful discussions.

\section{Related Work}

The allocation of resources to individuals in a desirable -- especially fair -- manner has been extensively studied over the past decades. The objective of fair distribution is to identify an allocation mechanism that satisfies certain properties, commonly known as fairness criteria. In the literature, there is a wide variety of notions of fairness, and numerous works have explored the possibility of achieving both fairness and efficiency simultaneously. These works have examined fairness notions such as Envy Freeness, in relation with Pareto optimality, and maximum Nash welfare. A survey in \cite{far} provides an overview. Our mechanism also draws from the theory of claims and taxation problems, particularly fair divisions in bankruptcy problems, as surveyed in \cite{T}.

Our work contributes to the field of redistributive mechanisms, with a specific focus on reducing inequalities. In the literature, a related study examines a two-sided market for trading goods of homogeneous quality, aiming to optimize the total utility of traders \cite{DKA}. However, our approach differs as we consider more general fairness measures beyond social welfare, and we assume that utilities are common knowledge. This line of research has been expanded to include settings with heterogeneous quality of tradable objects, various measures of allocation optimality, and imperfect observations of traders \cite{akbarpour2020redistributive}. Another related work explores multiple market and non-market mechanisms for allocating a limited number of identical goods to multiple buyers \cite{condorelli2013market}. The author argues that when buyers' willingness to pay aligns with the designer's allocation preferences, market mechanisms are optimal, and vice versa. In crises, where critical resources are highly valuable to all participants but some lack the necessary funds to acquire them, it is in society's best interest to allocate goods fairly. These findings suggest that relying solely on unregulated markets may not be the best approach during a distribution crisis.

Emissions allowances and tradable allowance markets share similarities with our work. Historically, regulators allocated tradable property rights directly to firms, leading to inefficiencies such as misallocation, regulatory distortions, and barriers to entry. Contemporary market designs utilize auctions for the allocation of tradable property rights. Tradable allowance markets, as discussed in \cite{dor}, play a crucial role in ensuring the efficiency of carbon markets and preventing market power exertion by large and dominant agents. Efficiency of multi-round trading auctions for the allocation of carbon emission rights is studied in \cite{WCZ}.

A general setting of Crisis with non-myopic traders is studied using numerical methods in ~\cite{CFLS}. Crisis with non-myopic traders is studied by analytic methods in ~\cite{SLC}. %, which accompanies this paper.

\section{Market Model}
\label{sec: trading env}

\newcommand{\market}{\mathbb{M}(S, B, G, V, M, D, R, u, \phi)}
\newcommand{\tmarket}{\mathbb{M^{\tau}}(S^{\tau}, B^{\tau}, G^{\tau}, V^{\tau}, M^{\tau}, D^{\tau}, R^{\tau}, u^{\tau}, \phi^{\tau})}

We fix a set of sellers $S$ and a set of buyers $B$. We denote $T = S \cup B$ the set of all traders and assume $S \cap B = \emptyset$. We will assume that Good and Right are discrete entities consisting of {\em indivisible items} while Money consists of {\em divisible items}.

{\bf Market} $\mathbb{M}$ is defined as tuple $\mathbb{M}= \market$, where 
$G = (G_s|s\in S)$ and $G_s$ denote the set of items of Good seller $s$ has at the beginning of the Market. Similarly, $M = (M_b|b\in B)$ and $M_b$ denote the amount of money each buyer $b$ has at the beginning of the Market. We let $V= \sum_{s\in S} G_s$ be the offered volume of Good in the Market $\mathbb{M}$. We also denote subsets of a set with a subscript, for example $G_A = (G_a|a\in A)$ for a set $A\subset T$. The Claim $D = (D_b|b\in B)$ gives the amount of Good each buyer hopes to acquire in the Market. Further, $R = (R_b|b\in B)$ and $R_b$ is the set of items of Right distributed to the buyer $b$ by the central authority
 (rights) distribution mechanism $\phi= (\phi_b)_{b\in B}$ of Market $\mathbb{M}$: for $b\in B$, $|R_b|= \phi_b(V, D)$. 
 \noindent The rights distribution mechanism of Market $\mathbb{M}$, modeled as $\phi: \mathbb{N}^{|B| + 1} \to \mathbb{N}^{|B|}$, distributes Right for all offered Good according to individual Claims. 
%It assigns no Right to buyers without Claim for the resource, be non-decreasing with Claim of each buyer, treat buyers with equal Claim equally. 
Our results do not depend on a particular choice of a distribution mechanism. During a crisis, which can take various forms, the authorities may adopt different rights distributions by optimizing different functions based on the crisis's specific features. 

\noindent Finally, the utility $u= (u_t|t\in T)$ of trader $t$, having amount $x$ of Good and amount $y$ of Money, is  $u_t^G(x)+ u_t^M(y)$  where $u_t^G, u_t^M$ are functions defined as follows.

\begin{definition}[Utility of traders in $\mathbb{M}$]
\label{def.utility}
 The {\rm Good-utility function} of trader $t$ in market $\mathbb{M}$ is  a $u_t^G:\mathbb{R}\rightarrow \mathbb{R}$ where $u_t^G(x)$ denotes the $t$'s utility of amount $x$ of Good. 
 
 The {\rm Money-utility function} of trader $t$ in market $\mathbb{M}$ is  $u_t^M:\mathbb{R}\rightarrow \mathbb{R}$ where $u_t^M(y)$ denotes the $t$'s utility of amount $y$ of Money.

We require that both $u_t^G, u_t^M$ are {\rm non-negative} and {\rm monotone}. Moreover, $u_t^M$ is linear, $u_t^G$ is subadditive and sellers have a positive utility only for Money. 
\end{definition}

% a utility function is defined in Definition \ref{def.utility}. 

{\bf Rights Assignment} of $|R_b|= \phi_b(V, D)$ items of Right to each buyer $b$ by the distribution mechanism $\phi$ constitutes the {\bf first step} of a Market. In the {\bf second step}, traders trade assigned Good, Right and Money in the {\bf Trading} which is a standard market with a single restriction that the final basket of each buyer has the amount of Right at least as big as the amount of Good. We also consider {\bf Restricted Trading} where we also require that the Money obtained for selling Right in $\mathbb{M}$ cannot be used to buy Good in $\mathbb{M}$. The buyers can use the obtained Money only in the following Markets of the Crisis.
If $X$ be a set of Good, Right and Money, we use ${\cal M}(X)$ (${\cal G}(X), {\cal R}(X)$ respectively) to denote the amount of Money (Good, Right respectively) in $X$.

\begin{definition} [Solution of $\mathbb{M}$]
\label{def.sol} 
A {\em solution} of Market $\mathbb{M}$ consists of (1) the price $q$ ($p$ respectively) per item of Right (Good respectively); the price of a unit of Money is fixed to $1$, and (2) a partition of a subset of $\cup_{t\in T} (G_t \cup R_t \cup M_t)$  into {\em baskets} $B_t$, $t\in T$, which satisfies the following condition (*): the total price of  $B_t$ is at most the total price of $t$'s initial endowment, and further if $t$ is a buyer then ${\cal R}(B_t)\geq {\cal G}(B_t)$.
A solution is {\bf optimal} if for each trader $t$, $u_t^\tau({\cal G}(B_t^\tau), {\cal M}(B_t^\tau))$ is maximum among all $u_t^\tau({\cal G}(Q), {\cal M}(Q))$, where $Q$ is a subset of the union of all the initial endowments, which satisfies the condition (*) above. Finally, solution is {\bf $t$-restricted} if $B_t$ contains the amount of Money which $t$ obtained from selling Right, i.e., 
$M_t\geq p{\cal G}(B_t)$. 
%${\cal M}(B_t)= M_t- p{\cal G}(B_t)+ q(\phi_t(V, D)- {\cal R}(B_t))$. 
It is {\bf restricted} if for each trader $t$, $B_t$ is $t$-restricted.
\end{definition}

\section{Approximation Algorithm for Solving the Market}
\label{sec.alg}

Next, we introduce a polynomial auction-based algorithm for finding an approximate optimal (restricted optimal respectively) solution of Market $\mathbb{M}$. The algorithm is analogous to the standard approximation algorithm for the Arrow-Debreu model of \cite{nisan2007algorithmic}), with three comodities: divisible Money with linear traders utilities, indivisible Good with sub-additive traders utilities, and indivisible Right with zero traders utilities. 
There are several issues in our setting (sub-additive utility function for Good, the restriction on the amount of Right and Good in the final baskets, and one additional condition in the restrictive setting) that make its analysis more complicated. 

We recall that $M_t$ denotes the initial endowment of Money of trader $t$ and for ease of notation, we denote by $R_t$ the initial endowment of Right of $t$. 

The algorithm, divided into iterations, auctions items of Good and Right. We will denote the current price of one item of Good (Right, respectively) by $p$ ($q$ respectively). We assume that the price of one unit of Money is equal to $1$ during the execution of the algorithm. 

Initially, buyers buy items of Good.  It follows from assumptions of Theorem \ref{thm.mmain} and the description of the algorithm (Lemma 2 of the proof of Theorem \ref{thm.mmain}) that all items of Good are bought for the initial price in the first iteration of the algorithm. This also assures that the initial prices are below the equilibrium prices. 

In further iterations, one item of Good is always bought along with one item of Right.
For the purpose of clarity of the algorithm description, we introduce a new commodity called {\em Couple} as a pair $(z_1,z_2)$ where $z_1$ is an item of Good and $z_2$ is an item of Right. Let ${\cal C}(X)$ denote the number of items of Couple in set $X$. For a trader $t$, we will assume their {\em utility for $x$ items of Couple} is the same as for $x$ items of Good, i.e., equal to $u_t^G(x)$. We will denote the current price of one item of Couple by $c$. In the first iteration, buyers form items of Couple from the items of Good they buy and items of Right they have or buy. As explained above, in the setting of Theorem \ref{thm.mmain}, after the first iteration only items of Couple are auctioned. 

The algorithm finds traders’ output baskets and establishes the price. Only the final output baskets need to adhere to the rules of the Market.  When {\em approximating a restricted optimal solution}, we keep invariant during algorithm execution that the cash (an auxiliary commodity defined in the algorithm description) of each buyer b is always at least
 $q(|R_b|-|C(B_b)|)$ where $B_b$ is the current basket of $b$\footnote{The invariant ensures that cash obtained by selling Right is not used to buy Good.}.

\medskip\noindent
{\bf The algorithm description} is analogous to the standard approximation algorithm for the Arrow-Debreu model of \cite{nisan2007algorithmic}) with three comodities: divisible Money with linear traders utilities and fixed unit price, indivisible Good with sub-additive traders utilities, and indivisible Right with zero traders utilities. 

Let $0< \epsilon< 1$. Initially, we let $p= q\gets 1$. Each buyer gets the surplus {\em cash} covering its initial endowment of Money and Right, as in the standard approximation algorithm for the Arrow-Debreu model of \cite{nisan2007algorithmic}. Cash is a dummy commodity representing the flow of Money in the system.

The algorithm is divided into {\em iterations}. In each iteration, buyers form items of couple from items of Good and Right they currently have. We always have $c=p+q$ and in particular, initially $c=2$. 

In the beginning of each iteration, each item of Couple is offered for $c$ and analogously for Good and Right. However, during the iteration, some items of Couple are offered for $c$ and some for $(1+\epsilon)c$. 
An iteration ends when the price of {\em each item} of Couple is raised from $c$ to  $(1+\epsilon)c$.  

Each iteration is divided into {\em rounds}.

 {\bf Round:} we fix an arbitrary order of buyers and consider them one by one in this order. Let buyer $b$ be considered. Let us denote by $o^b$ the number of items of Couple $b$ currently has, and by $o_+^b$ the number of items of Couple $b$ currently has of price $(1+\epsilon)c$. 

Let $I^b$ ($I$ as ideal) be a set of items of Couple and of Money of max total utility which $b$ can buy with $co^b$ plus its current cash (respecting the invariant of the algorithm if we aim for the restricted optimal solution). 
%\begin{enumerate}
%\item

If ${\cal C}(I^b)< o^b$ then $b$ does nothing, the algorithm moves to the next buyer\footnote{If this occurs, the current basket of $b$ is optimal for the previous price $c / (1+\epsilon)$ and $o_+^b= 0$.}.
%\item

If ${\cal C}(I^b)\geq  o^b$ then $b$ buys items of Couple via the Outbid.

{\bf Outbid}:
\begin{itemize}
\item
The system buys with cash one by one and at most ${\cal C}(I^b)-  o_+^b$ items of Couple for price $c$ and sells them to $b$ for cash price $(1+\epsilon)c$ per item, maintaining the invariant. First, it buys from $b$ itself.  
\item
When the system purchases a Couple $(z_1,z_2)$ for buyer $b$, the $z_1$ and $z_2$ may be bought separately from different traders, and composed into the item of Couple. This happens when some items of Right and (necessarily the same amount of items of) Good are not yet coupled in previous rounds. We observe that this happens only if they are available for the initial price from the traders. In this situation, the system again buys items of Right first from the buyer $b$.
However, the system pays nothing if it buys items of Right from an initial endowment of a buyer for the initial price since the payment is already in the surplus cash. {\bf End of Outbid}
\end{itemize}
%\begin{itemize}
%\item

If no more Couple is available at price $c$  during or after the Outbid then the current round and iteration terminate, $p\gets (1+\epsilon)p$ , $q\gets (1+\epsilon)q$ , 
$c\gets (1+\epsilon)c$ and the cash is updated:  
everybody who had Good or Right in its initial endowment gets extra cash, $\epsilon p$ per item of Good or $\epsilon q$ per item of Right. {\bf End of Round}
%\item

If a round went through all buyers, the algorithm proceeds with the next round. {\bf End of Iteration}
%\item

When nobody wants to buy new items of Couple, the whole trading ends. The system takes all Money from the buyers, sells them to both the buyers and the sellers for cash and keeps whatever quantity remains.
%\item
The output of the algorithm consists of (1) the collection of the final baskets of each trader and (2) the terminal prices  $p, q, c$.
{\bf This finishes the description of the algorithm.}

\subsection{Analysis of the Algorithm}
\label{sub.anal}
Next, we show the correctness of the algorithm and upper bound its complexity. Although the algorithm is analogous to the standard approximation algorithm for the Arrow-Debreu model of \cite{nisan2007algorithmic}, the analysis is more involved since the nature of Market is different than the nature of the trading environment considered in \cite{nisan2007algorithmic}. In the analysis, we use assumptions of two kinds. First, we make the natural algorithmic assumption that {\em each run of the Outbid, which is carried out by the central authority, takes one algorithmic step}.  Secondly, we assume that {\em the sizes of the initial endowments of the buyers satisfy the three properties stated in Definition \ref{def.fe} below}. Properties (1),(2) are simple to achieve by rescaling the units of Money. Property (3) follows from the assumption mentioned in the Introduction, that buyers (for example hospitals) have practically unlimited amounts of money and use their individual sub-additive utility function to decide how to spend them.

\medskip

\begin{definition}
\label{def.fe}
The initial endowments $(R_b, M_b); b \in B$ are {\em valid} if they satisfy, for each buyer $b\in B$, the following properties: 
  (1) $M_b> 4R_b$, 
(2) for each $x\leq R_b$, $u_b^G(x)\geq  2u_b^M(x)$ and
(3) for each $x\geq 1/2M_b$, $u_b^M(x)> u_b^G(x)$ and
for each $x\geq D_b$, $u_b^G(D_b)= u_b^G(x)$.
%Moreover, the offered volume of Good is {\em feasible} if $V^\tau= \sum_{s\in S} G_s^\tau$. 
\end{definition}

\medskip

\begin{theorem}
 \label{thm.mmain}
  Let $0< \epsilon< 1$. We assume that initial endowments of buyers are valid. We further assume that each run of the Outbid takes one algorithmic step. Let $m$ denote the total initial endowment of all buyers. The following holds.
  \begin{enumerate}
 \item
  The time-complexity of the auction-based algorithm is at most 
  $|B|^2(\log_2 |V|)(1+\log_{1+\epsilon}m)$; hence, it is polynomial in the input size. 
 \item
 For each participant, its basket assigned by the algorithm is feasible and its price plus $1$ is bigger than the total price of its initial endowment. 
 \item
 The terminal price of Right is equal to the terminal price of Good.
 \item
 Relative to terminating prices, each buyer or seller gets a basket of utility at least $(1-\epsilon)$ times the utility of its optimal feasible basket.
 \end{enumerate}
 \end{theorem}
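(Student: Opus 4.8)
The plan is to follow the template of the Arrow--Debreu auction analysis in Chapter 5 of~\cite{nisan2007algorithmic}, adapting it to the three commodities and the coupling constraint, and to prove the four claims roughly in the order (1)$\to$(3)$\to$(2)$\to$(4), since the structural facts feed the complexity and feasibility arguments. I would first record the two invariants hinted at in the algorithm description as formal lemmas: (Lemma~1) after the first iteration every uncoupled item of Good or Right has been coupled, so only items of Couple are auctioned, and whenever the system composes a Couple from separately-bought pieces, those pieces were still available at the initial price; and (Lemma~2) using validity property~(1) $M_b>4R_b$ together with property~(2) $u_b^G(x)\ge 2u_b^M(x)$ for $x\le R_b$, every item of Good is sold in the first iteration at the initial price, so the initial prices lie strictly below any equilibrium price. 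These two lemmas are the backbone: they justify treating the auction as a single-commodity (Couple) auction after iteration one, which is exactly the situation~\cite{nisan2007algorithmic} handles.

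For claim~(1), I would bound the number of price-scaling events. Each iteration multiplies $c$ (hence $p$ and $q$) by $(1+\epsilon)$; since total money in the system is $m$ and every Couple trades at price at least $2$, prices cannot exceed a fixed polynomial in $m$, giving at most $O(\log_{1+\epsilon} m)$ iterations. Within an iteration, each \emph{round} either advances to the next buyer doing nothing, or performs Outbids; the doubling structure on $|V|$ (a round at a given price level can at most double the number of high-priced couples a buyer holds relative to the previous level, or more carefully, the potential $\sum_b o_+^b$ behaves geometrically) bounds the number of rounds per iteration by $O(|B|\log_2|V|)$, and each round touches $|B|$ buyers with $O(1)$ Outbid steps each by the algorithmic assumption. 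Multiplying yields the stated $|B|^2(\log_2|V|)(1+\log_{1+\epsilon} m)$. The honest accounting of why a buyer cannot repeatedly be Outbid-active at the same price level without the couple count growing geometrically is the fiddly part here.

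Claims~(3) and~(2) are then comparatively short. Claim~(3) is immediate from the invariant $c=p+q$ maintained throughout together with the symmetric scaling $p\gets(1+\epsilon)p$, $q\gets(1+\epsilon)q$ and the fact that $p=q=1$ initially, so $p=q$ at termination. For claim~(2), feasibility of each buyer's basket follows because a Couple bundles one Good with one Right, so $\mathcal{R}(B_t)\ge\mathcal{G}(B_t)$ automatically, and in the restricted case the maintained cash invariant $\text{cash}_b\ge q(|R_b|-|\mathcal{C}(B_b)|)$ is exactly the condition $M_t\ge p\,\mathcal{G}(B_t)$ at termination (using $p=q$); the budget inequality "price of basket $+1>$ price of initial endowment" comes from tracking the surplus cash, which at termination is strictly less than one price-increment worth of any single item because the algorithm stops as soon as no buyer wants to outbid. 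For claim~(4), I would argue that at termination each buyer $b$ holds a basket optimal for the \emph{previous} price level $c/(1+\epsilon)$ (this is the footnote observation in the Round description, extended to the terminal state), so its utility is at least the optimal utility at price $c$ for a $(1+\epsilon)$-scaled-down budget; subadditivity of $u_b^G$ and linearity of $u_b^M$ let me pass from "scale the budget by $1/(1+\epsilon)$" to "scale the utility by at least $1-\epsilon$", and validity property~(3) guarantees the buyer never wants more Good than $D_b$ so the Right constraint is never the binding obstruction to this comparison. Sellers are trivial since they value only Money. I expect the main obstacle to be claim~(4) in the subadditive regime: the clean proportionality between budget and utility used in the linear Arrow--Debreu analysis fails, so I would need to combine the subadditive bound on $u_b^G$ with the fact that, by Lemma~2 and property~(2), the buyer's optimal basket at the relevant price level is "Good-heavy" in a controlled way, making the loss from the $(1+\epsilon)$ price gap at most a $(1-\epsilon)$ factor in utility rather than in quantity.
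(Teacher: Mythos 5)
Your overall template is right, and your treatment of claims (2) and (3) matches the paper's (terse) arguments. But there are two genuine gaps, at precisely the points you yourself flag as fiddly. First, for claim (1) your bound on the number of rounds per iteration rests on an unproven assertion that the potential $\sum_b o_+^b$ grows geometrically; nothing in the algorithm forces this, and it is not how the count actually works. The paper instead proves that each fully completed round permanently retires at least one buyer within the current iteration: if a buyer $b$ buys Couple from another buyer, then $b$ itself has no Couple left at the old price $c$ (the Outbid forces $b$ to buy from itself first), hence nobody buys from $b$ afterwards, $b$ receives no further cash in this iteration, and this is $b$'s last active round. That yields at most $2+|B|$ rounds per iteration. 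The factor $\log_2|V|$ enters elsewhere: each buyer's decision within a round is computed by binary search for the largest $k$ with ${\cal G}(X)+k\leq D_b$ and $u_b^G({\cal G}(X)+k)-u_b^G({\cal G}(X))>u_b^M(ck)$, using monotonicity, subadditivity and linearity; so it is $\log_2|V|$ algorithmic steps per buyer per round, not $\log_2|V|$ extra rounds. Your decomposition happens to multiply out to the same expression, but the lemma that would justify it is missing.

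Second, for claim (4) your plan (terminal basket optimal at the previous price level, then convert a $1/(1+\epsilon)$ budget scaling into a $(1-\epsilon)$ utility scaling via subadditivity, helped by the optimal basket being \emph{Good-heavy}) does not go through as stated: with subadditive $u_b^G$ there is no proportionality between budget and Good-utility, and the paper's resolution is in fact the opposite of Good-heavy. The paper writes a buyer's optimal basket as $x$ items of Couple plus $y$ of Money with $y+cx\geq M_b$, and uses validity property (3) to show $y>cx$: if instead $cx\geq M_b/2$, then $u_b^G(x)+u_b^M(y)\leq u_b^G(cx)+u_b^M(y)<u_b^M(cx+y)$, contradicting optimality of $(x,y)$. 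Since the only suboptimality of the terminal basket is an overpayment of at most $\epsilon c$ on at most $x$ Couples, that basket contains $x$ Couples and at least $y-\epsilon cx\geq(1-\epsilon)y$ Money, so its utility is at least $u_b^G(x)+(1-\epsilon)u_b^M(y)\geq(1-\epsilon)\bigl(u_b^G(x)+u_b^M(y)\bigr)$; the whole $\epsilon$-loss is charged to the linear Money part and subadditivity of $u_b^G$ is never asked to behave proportionally. Without an argument of this shape (and your Good-heavy heuristic points the wrong way), claim (4) remains open in your write-up.
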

\begin{proof}

We denote by $m, r, g$ the total initial endowments of Money, Right and Good of all the traders and recall that $r= g$ since the offered volume of Good satisfies $V= \sum_{s\in S} G_s$ (see Section 3).
We prove the theorem in a sequence of lemmas.

%  {\em Claim 0.}
\begin{lemma}
\label{l.B1}
 At each stage of the algorithm, the total amount of cash among the buyers is at most $2m$. We recall cash is a dummy commodity introduced in this section which represents the flow of Money in the system.
 \end{lemma}
 \begin{proof}
 Lemma is true in the beginning by assumption (1) of Definition \ref{def.fe}. The total amount of cash is gradually decreasing during each iteration, since the system always buys for less than it sells. At the end of each iteration, the system gives back to the buyers the amount of cash it earned during that iteration.
 \end{proof}
 
%  {\em Claim 1.}
\begin{lemma}
\label{l.B2}
 In the first iteration, all items of Good and Right are paired.
 \end{lemma}
 \begin{proof}
 By assumptions (2) of Definition \ref{def.fe}, all buyers prefer to buy at least the fair amount of items of Couple for the initial price; by assumption (1) there is enough cash in the initial surplus of each buyer to do it.
\end{proof}
 
%   {\em Claim 2.}
\begin{lemma}
\label{lem: end of iteration}
 After the end of the first iteration: (1) a buyer owes to the system only cash for its initial endowment of money and (2) the total cash among all traders is always at most $m$.
 \end{lemma}
 \begin{proof}
 The first part follows from Lemma \ref{l.B2} since all items of Good and Right are sold and bought at the end of the first iteration. For the second part, we note that among sellers, the total cash is $pg$ since all items of Good were sold in the first iteration and among buyers, the total cash is at most $m-2pg+pg$ since the buyers paid for the items of Couple (for items of Good to the sellers and for items of Right to the buyers), and they kept cash for the initial endowments of Right since all items of Right were sold and bought in the first iteration.
 \end{proof}

% {\em Claim 3.}
\begin{lemma}
\label{lam: bound on number of rounds}
  The number of rounds in each iteration is at most $2+ |B|$.
\end{lemma}
\begin{proof}
We observe that in each fully completed round, either none of the buyers buy items of Couple and the trading ends, or none of the buyers buy items of Couple in the next round and the trading ends, or at least one buyer act for the last time in this iteration. If in the current round every buyer buys items of Couple only from itself, then in the next round nobody buys since no one got additional cash. Hence, let a buyer $b$ buy items of Couple from another buyer in the current round. This means that $b$ has no Couple items for $c$, otherwise it would have to buy these first by the rules of the outbid. Hence, in this round nobody buys from $b$ meaning $b$ receives no additional cash in this iteration and the current round is the last active one for $b$.
\end{proof}

\begin{lemma}
\label{lam: bound on complexity of rounds}
  Each buyer must perform at most $\log_2 |V|$ algorithmic steps in each round.
\end{lemma}
\begin{proof}
All utility functions are known and monotone, $u_b^M(x)$ is linear and $u_b^G(x)$ is sub-additive. Let $X$ be the current basket of buyer $b$.  In the algorithm, buyer $b$ does not need to know $I^b$, it only needs to find maximum $k$ such that $|{\cal G}(X)|+k\leq D_b$ and $u_b^G({\cal G}(X)+k)- u_b^G({\cal G}(X)) > u_b^M(ck)$. This can be done by the binary search, given the assumptions on utilities.
 
\end{proof}
 
%   {\em Claim 4.}
\begin{lemma}
\label{lem: bound on number of iterations}
 The total number of iterations is at most $1+\log_{1+\epsilon}m$.
 \end{lemma}
  \begin{proof}
 Each iteration raises the price of Couple by the factor of  $(1+\epsilon)$ and the max price per unit of Couple cannot be bigger than the total surplus. This argument is essentially identical to the proof of Lemma 5.24 for the auction algorithm of Chapter 4.2 of \cite{nisan2007algorithmic}.
 \end{proof}
 
%  {\em Claim 5.}
\begin{lemma}
\label{lem: bound on utility}
 Relative to terminating prices, each buyer or seller gets a basket of utility at least $(1-\epsilon)$ times the utility of its (restricted) optimal basket.
 \end{lemma}
  \begin{proof}
  (1) Buyers owe nothing to the system since after the end of the trading they keep only the items of Money they can buy with their remaining cash.
 
 (2) After the end of the trading and buying items of Money, each participant is left with the amount of cash less than $1$ by the description of the last stage of the algorithm.
 
(3) Baskets of sellers are optimal (see Definition \ref{def.sol} ) since all items of Good were sold by Lemma \ref{l.B2}. 

(4) The only reason why the basket of a buyer $b$ is not (restricted) optimal is:
For some items of Couple, $b$  paid $(1+\epsilon)c$ where $c$ is the terminal price of Couple. 
Let $x$ ($y$, respectively) denote the total number of items in the Couple (Money, respectively) in a (restricted) optimal basket of $b$, whose utility is thus $u_b^G(x)+ u_b^M(y)$. This basket is market clearing and thus $y+cx\geq M_b$. 

 We observe that $y> cx$ since otherwise $cx\geq M_b/2$ (using the inequality $y + cx \geq M_b$) and by Definition \ref{def.fe} (3),
$u_b^G(x)+ u_b^M(y)\leq u_b^G(cx)+ u_b^M(y)< u_b^M(cx+y)$ which contradicts the optimality of $x,y$.
 
In $b$'s {\em terminal basket},  there are $x$ Couple items and at least $y- \epsilon cx$ amount of Money. 
The utility of $b$'s {\em terminal basket} is, using $y> cx$ and the linearity of the utility of Money, at least 
$u_b^G(x)+ u_b^M((1-\epsilon)y)=  u_b^G(x)+ (1-\epsilon) u_b^M(y)$. 

\end{proof}

Now we are ready to prove Theorem \ref{thm.mmain}. Let us address each point separately
 
(1)  It follows from Lemmas~\ref{lam: bound on number of rounds}, \ref{lam: bound on complexity of rounds}, \ref{lem: bound on number of iterations} that the time complexity of the auction-based algorithm behaves asymptotically as 
$|B|^2(\log_2 |V|)(1+\log_{1+\epsilon}m)$.

(2) follows from (1) of Lemma~\ref{lem: end of iteration}, 

(3) follows from the description of the algorithm and 

(4) is Lemma \ref{lem: bound on utility}.
\end{proof}

\section{Effect of Rationing on the Frustration of Buyers}% in (Restricted) Crisis}
\label{s.wp}
When the Good is traded, the final allocation may differ from the ideal centralized distribution. This disparity serves as a measure of the inefficiency inherent in trading when it comes to allocating the Good in a manner that aligns with the central authority's preferences.

\subsection{Frustration}

The concept of the Right can be understood as the socially determined entitlement of a buyer to a specific amount of Good. We recall from the Introduction that the {\em frustration} of a buyer is the normalized difference between the amount of Right he is assigned and the amount of Good he acquired in a Market if that is at least zero, and zero otherwise. 
 This concept resembles the price of anarchy, quantifying the cost incurred by the system due to the selfish behavior of the involved actors~\cite{koutsoupias1999worst}.

\begin{definition}
\label{def.fru}
The {\em frustration}, denoted by $f_b$, of buyer $b\in B$ who acquired ${\cal G}_b={\cal G}(B_b)$ items of Good in market $\mathbb{M}$, is 
\begin{equation*}
    f_b = 
    \max\left\{0, \frac{\phi_b(V, D)- {\cal G}_b}{\phi_b(V, D)}\right\}.
\end{equation*}
\end{definition}

\begin{remark}[Potential frustration in a Market is at most $1/2$]
\label{rem.realf}
In our setting, the notion of frustration has two caveats. First, in a Market of a Restricted Crisis, buyers cannot use the acquired Money to buy Good, and their frustration is thus the same as in the free market. Second, possibly the subadditive utility function does not allow buyers to use the acquired Money to buy Good; {\em their frustration is by choice}. These caveats are the reason to consider {\em potential frustration}, denoted by $pf_b$, of buyer $b\in B$ in market $\mathbb{M}$, as
\begin{equation*}
    pf_b = 
    \max\left\{0, \frac{\phi_b(V, D)- {\cal G'}_b- d'/(d+d')|\phi_b(V, D)-{\cal G'}_b|}{\phi_b(V, D)}\right\},
\end{equation*}
where ${\cal G'}_b$ is the amount of Good $b$ acquired using his initial Money endowment $M_b$, $d$ is the price $b$ payed for one item of Good and $d'$ is the price $b$ payed for one item of Right. Hence, $pf_b\leq 1-[d'/(d+d')]$. 
\end{remark}
By Theorem \ref{thm.mmain}, we know that our auction-based algorithm gives  a (restricted) solution where all items of Right are sold, $d= d'$ and it approximates the efficient solution of the Market within an arbitrary constant. We conclude that
\begin{theorem}
    \label{thm.pot}
Potential frustration of each efficient solution of the Market is at most 1/2, and it is equal to 1/2 for each buyer who bought no Good for his initial endowment of Money. 
\end{theorem}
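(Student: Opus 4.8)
The plan is to reduce the whole statement to one pricing identity — that in every efficient solution the per‑unit price of Good equals the per‑unit price of Right, $d=d'$ — and then read off both conclusions from the formula for $pf_b$ in Remark \ref{rem.realf}. That remark already gives $pf_b\le 1-d'/(d+d')$, so as soon as $d=d'$ we obtain $pf_b\le 1-\tfrac12=\tfrac12$, which is the first assertion. For the second, a buyer with ${\cal G}'_b=0$ has $|\phi_b(V,D)-{\cal G}'_b|=\phi_b(V,D)$, so with $d=d'$ the numerator in the definition of $pf_b$ equals $\phi_b(V,D)-\tfrac12\phi_b(V,D)=\tfrac12\phi_b(V,D)$ and hence $pf_b=\tfrac12$ exactly. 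Thus the entire content of the theorem lies in establishing $d=d'$ for an \emph{arbitrary} efficient solution, not merely for the algorithm's output.

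First I would record the structural facts that hold in every efficient solution, independently of the approximation algorithm. Because sellers derive positive utility only from Money (Definition \ref{def.utility}) and Good is scarce so its equilibrium price is positive, every seller sells all of its Good, and, since sellers never hold Right, all $V$ units of Right remain among the buyers. With the equal totals $r=g=V$ (as in the proof of Theorem \ref{thm.mmain}), summing the feasibility constraint ${\cal R}(B_b)\ge {\cal G}(B_b)$ over all buyers yields $V=\sum_b{\cal R}(B_b)\ge \sum_b{\cal G}(B_b)=V$, which forces ${\cal R}(B_b)={\cal G}(B_b)$ for every buyer $b$. Consequently, in any efficient solution Good and Right are held, and therefore transacted, only in one‑to‑one couples.

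The crux — and the step I expect to be the main obstacle — is to pass from this coupling to $d=d'$. Writing the linear Money‑utility as $u_b^M(y)=\alpha_b y$, each buyer's problem reduces (up to additive constants) to $\max_g\,[\,u_b^G(g)-\alpha_b c\,g\,]$ with $c=p+q$, so the efficient allocation of Good is pinned down by the couple price $c$ alone, while market clearing by itself leaves the split $c=p+q$ undetermined. What removes this freedom is exactly the one‑to‑one coupling established above: since no transaction ever moves a unit of Good without an accompanying unit of Right and the two commodities occur in equal total amount $r=g=V$, there is no trade that prices one without the other, and the per‑item prices entering the frustration formula are the symmetric split $d=d'=c/2$. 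This is precisely the relation of Theorem \ref{thm.mmain}(3), which holds at the terminating prices of the approximation algorithm for every $\epsilon$ and, as $\epsilon\to0$, at the efficient prices the algorithm converges to. Substituting $d=d'$ into the computation of the first paragraph then gives $pf_b\le \tfrac12$ for every buyer, with equality exactly for those buyers who bought no Good from their initial Money endowment, completing the proof.
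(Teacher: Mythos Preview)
Your overall approach coincides with the paper's: both arguments reduce the claim to the identity $d=d'$ via Remark~\ref{rem.realf}, and both obtain that identity from Theorem~\ref{thm.mmain}(3) (the algorithm maintains $p=q$ throughout, hence at termination and in the $\epsilon\to 0$ limit). The computation you give for the case ${\cal G}'_b=0$ is exactly the intended one.

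There is, however, a genuine gap in the middle of your argument. You correctly establish that in any efficient solution Good and Right appear in buyers' baskets only in one-to-one couples, and that each buyer's optimal bundle depends only on $c=p+q$, so market clearing alone leaves the split $(p,q)$ undetermined. But from this indeterminacy you then jump to ``the per-item prices entering the frustration formula are the symmetric split $d=d'=c/2$.'' That is a non sequitur: if only $c$ matters and the split is free, then $(p,q)=(c-\delta,\delta)$ supports the \emph{same} efficient allocation for any $0<\delta<c$, and in those solutions $d\neq d'$ and the potential-frustration bound $1-d'/(d+d')$ changes accordingly. Coupling of the commodities does not force equal prices; it is precisely what makes the individual prices unidentified. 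The paper does not attempt an independent argument here at all --- it simply invokes the algorithm, whose initialization $p=q=1$ and uniform multiplicative updates build in $p=q$ by construction. Your fallback to Theorem~\ref{thm.mmain}(3) with $\epsilon\to 0$ is therefore doing all the work, and the ``symmetric split'' sentence should be removed or recast as the convention singled out by the algorithm rather than a consequence of efficiency.
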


Since myopic traders optimize each Market of a Crisis, we conclude that in a Crisis with myopic traders, potential frustration of a buyer is at most $1/2$. 

\medskip

Next, we study the frustration in Restricted Crisis consisting of a sequence of Markets $(\tmarket)_{\tau \leq T}$, when all sellers and buyers are myopic. We are only able to make observations on the developing frustration under further assumptions.

\begin{definition}
We say a restricted Crisis is measurable if
\begin{enumerate}
    \item The total supply and the individual Claims do not change in the sequence $(\mathbb{M}^{\tau})_{\tau \leq T}$. {\em This is quite natural in the middle of a distribution crisis.}  
   
%    \item The individual utility of Good in each $\mathbb{M}^\tau$ is such that each buyer $b$ wants to spend all available Money for buying Good, no matter its amount and price.
    \item The individual utility for Good satisfies: let us denote by $m^{\tau}_b$ the amount of Money $b$ is willing to spend in the Market $\mathbb{M}^{\tau}$ and by $z^\tau_b$ the amount of Money $b$ obtained by selling Right in $\mathbb{M}^\tau$. For each $\tau< T$,
$$
m^{\tau+1}_b= m^{\tau}_b + z^\tau_b.
$$
 {\em Hence, the subadditive utility function of buyer $b$ allows him to spend in $\mathbb{M}^{\tau+1}$ the same amount of Money as in the previous Market $\mathbb{M}^{\tau}$, plus the amount of Money he acquired by selling Rights in $\mathbb{M}^{\tau}$. } 
\end{enumerate}
\end{definition}

We further observe: if the sellers are myopic then the offered volume of Good satisfies for each $\tau$ that $V^\tau= \sum_{s\in S} G_s^\tau$.

The measurability assumptions are quite restrictive but arguably natural in our regime of myopic traders in a stable middle part of a distribution crisis. Note that if buying rights are not traded, the frustration can easily rise to 1. With Right, however, we can again prove an upper bound of $1/2$ to the frustration of each buyer.

\begin{theorem}
\label{thm.wp}
In all but possibly the first Market of a measurable restricted Crisis, if the efficient solution is approximated by the auction-based algorithm of section \ref{sec.alg}, each individual frustration is at most $1/2$. 
\end{theorem}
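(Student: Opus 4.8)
The plan is to fix a Market $\mathbb{M}^\tau$ with $\tau\ge 2$ and argue by induction on $\tau$. First I would collect what is already available: by measurability~(1) the supply $V^\tau$ and the allocation $\phi_b(V^\tau,D^\tau)$ are the same in every round, so write $V$ and $\phi_b$ for them; by Theorem~\ref{thm.mmain} and the observation recorded just before Theorem~\ref{thm.pot}, the algorithm's (restricted) output in each Market has all items of Right sold, equal terminal prices of Good and Right — call the common value $p^\tau$ — and baskets of utility within $(1-\epsilon)$ of optimal; moreover $p^\tau\ge 1$ since prices only increase during a run. Because all Good and all Right are sold and a buyer never keeps Right in excess of his Good (Right being useless to him), $\sum_b\mathcal{G}^\tau_b=V=\sum_b\phi_b$, and the frustration of $b$ in $\mathbb{M}^\tau$ is $f^\tau_b=(\phi_b-\mathcal{G}^\tau_b)^+/\phi_b$. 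Since $\epsilon$ is arbitrary I would run the estimates for the exact efficient solution and absorb the resulting $O(\epsilon)$ at the very end.

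Next I analyse a single buyer $b$ with $f^\tau_b>0$, i.e. $\mathcal{G}^\tau_b<\phi_b$. Such a buyer is a net seller of Right, and the Restricted rule forbids spending the proceeds of those sales inside $\mathbb{M}^\tau$, so his Good purchases are financed entirely from his current endowment $m^\tau_b$; since he has not exhausted his demand for Good he spends all of it, giving $p^\tau\mathcal{G}^\tau_b=m^\tau_b$. Now unroll one step with measurability~(2): $m^\tau_b=m^{\tau-1}_b+z^{\tau-1}_b$, and the same accounting in $\mathbb{M}^{\tau-1}$ gives $m^{\tau-1}_b=p^{\tau-1}\mathcal{G}^{\tau-1}_b+p^{\tau-1}(\mathcal{G}^{\tau-1}_b-\phi_b)^+$ together with $z^{\tau-1}_b=p^{\tau-1}(\phi_b-\mathcal{G}^{\tau-1}_b)^+$, whence in either case $m^\tau_b\ge p^{\tau-1}\phi_b$. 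Therefore $\mathcal{G}^\tau_b\ge\phi_b\,p^{\tau-1}/p^\tau$ and
\[
f^\tau_b\;\le\;1-\frac{p^{\tau-1}}{p^\tau},
\]
so everything reduces to the price-growth bound $p^\tau\le 2p^{\tau-1}$.

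For that I would use an aggregate money balance. Put $\Delta^\sigma:=\sum_b(\mathcal{G}^\sigma_b-\phi_b)^+=\sum_b(\phi_b-\mathcal{G}^\sigma_b)^+$. Summing the per-buyer spending identity over $b$ — every buyer spends his whole current endowment, which is exactly what measurability~(2) encodes — yields $p^\sigma(V+\Delta^\sigma)=\sum_b m^\sigma_b$ for every $\sigma$; summing measurability~(2) over $b$ and using $\sum_b z^{\tau-1}_b=p^{\tau-1}\Delta^{\tau-1}$ then gives
\[
p^\tau(V+\Delta^\tau)\;=\;p^{\tau-1}(V+2\Delta^{\tau-1}),
\]
so $p^{\tau-1}/p^\tau=(V+\Delta^\tau)/(V+2\Delta^{\tau-1})\ge V/(V+2\Delta^{\tau-1})$ and hence $f^\tau_b\le 2\Delta^{\tau-1}/(V+2\Delta^{\tau-1})\le\tfrac12$ as soon as $\Delta^{\tau-1}\le V/2$. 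For $\tau\ge 3$ this is precisely the induction hypothesis: if $f^{\tau-1}_b\le\tfrac12$ for all $b$ then $\Delta^{\tau-1}=\sum_b(\phi_b-\mathcal{G}^{\tau-1}_b)^+\le\tfrac12\sum_b\phi_b=V/2$. For the base case $\tau=2$ one needs $\Delta^1\le V/2$ directly, which I would extract from the valid-endowment hypotheses on $\mathbb{M}^1$: Definition~\ref{def.fe}(3), linearity of $u_b^M$ and $p^1\ge 1$ force every buyer to retain at least $M_b/2$ of his Money in $\mathbb{M}^1$, so $\sum_b m^1_b\le\tfrac12\sum_b M_b$, and combining this with $p^1(V+\Delta^1)=\sum_b m^1_b$ and the lower bound on $p^1$ coming from the clearing of $V$ units of Good (together with Definition~\ref{def.fe}(2), which pins every buyer to exactly $\phi_b$ couples after the first iteration) should yield $\Delta^1\le V/2$. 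Finally, reinstating the $(1-\epsilon)$ slack of the algorithm shifts $\mathcal{G}^\tau_b/\phi_b$ by only $O(\epsilon)$, so letting $\epsilon\to 0$ gives $f^\tau_b\le\tfrac12$.

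The step I expect to be the real obstacle is the price-growth control, and within it the base case $\tau=2$: unlike $\tau\ge 3$ it cannot bootstrap on the conclusion for the previous round and must instead squeeze $\Delta^1\le V/2$ out of Definition~\ref{def.fe} alone — in particular one must rule out a rich buyer with a very small Right allocation cornering almost all of the Good in $\mathbb{M}^1$, which seems to require that the distribution mechanism not be pathological. A related technical point is making the ``no-hoarding'' step fully rigorous for the indivisible, approximately-clearing auction, so that $p^\sigma(V+\Delta^\sigma)=\sum_b m^\sigma_b$ holds with equality; this is where subadditivity of $u_b^G$, Definition~\ref{def.fe}(2)--(3) and $p^\sigma\ge 1$ have to be combined, and where buyers who happen to meet their Claim (and hence have frustration $0$) must be argued to be harmless for the aggregate identity.
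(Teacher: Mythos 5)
Your overall strategy differs from the paper's, and it contains a gap that you yourself flagged but that cannot be repaired in the form you propose: the base case $\tau=2$. Your argument bounds every individual frustration by $1-p^{\tau-1}/p^\tau$ and then controls the price growth via the aggregate quantity $\Delta^{\tau-1}=\sum_b(\phi_b-\mathcal{G}^{\tau-1}_b)^+$, which forces you to assume $\Delta^{1}\le V/2$ to start the induction. But nothing in Definition~\ref{def.fe} or in measurability gives this: the whole reason the theorem excludes the first Market is that first-round frustration can be essentially arbitrary (a wealthy buyer with a small Right allocation can corner almost all of the Good in $\mathbb{M}^1$, making $\Delta^1$ close to $V$). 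This is not a pathological case to be excluded by an extra hypothesis on the distribution mechanism; it is exactly the situation the theorem is meant to cover, and the paper's proof handles it. A secondary weakness is the aggregate identity $p^\sigma(V+\Delta^\sigma)=\sum_b m^\sigma_b$: buyers whose Claim is met (Definition~\ref{def.fe}(3) saturates $u_b^G$ at $D_b$, and Money is preferred beyond $M_b/2$) need not exhaust their willingness to spend, so you only get an inequality in the direction that destroys the lower bound on $p^{\tau-1}$ you need for the price-doubling estimate.

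The paper avoids both problems by arguing per buyer and dynamically rather than by aggregate accounting and induction. It observes that, under measurability, the auction in $\mathbb{M}^{\tau+1}$ first replays $\mathbb{M}^\tau$; afterwards only the buyers frustrated in $\mathbb{M}^\tau$ keep bidding, each with exactly the extra Money $z_b$ obtained from selling $n_b$ Rights in $\mathbb{M}^\tau$. The key point your aggregate approach discards is that $z_b$ is \emph{proportional} to $n_b$, i.e.\ to how frustrated $b$ was: when the Couple price reaches double its $\mathbb{M}^\tau$ value, $b$'s budget $z_b$ is exactly enough to buy back $n'_b$ Couples with $n_b-n'_b=R_b/2$, bringing his frustration to exactly $1/2$, while the Money released to the non-frustrated buyers is insufficient for them to outbid further (since $n_b-n'_b\ge n'_b$). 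Hence the price at most doubles and every frustrated buyer lands at frustration at most $1/2$ in $\mathbb{M}^{\tau+1}$ \emph{regardless of how bad $\mathbb{M}^\tau$ was}, which is precisely what your induction cannot deliver at its base case. If you want to salvage your route, you would have to replace the global bound $\Delta^{\tau-1}\le V/2$ by this per-buyer proportionality between a buyer's shortfall and the funds he carries into the next round.
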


\begin{proof}We say that a buyer is frustrated in a specific Market if their frustration after the Market is non-zero. Let Market $\mathbb{M}^\tau, \tau\geq 1$ of the Crisis end and let us consider the next Market $\mathbb{M}^{\tau+1}$. By the assumptions of the theorem, the auction-based algorithm repeats the steps of Market $\mathbb{M}^\tau$.  After the final step of the auction for $\mathbb{M}^\tau$, the willingness to pay of non-frustrated (in $\mathbb{M}^\tau$) buyers is saturated.  However, frustrated (in $\mathbb{M}^\tau$) buyers continue buying Couple 
since they acquired additional funds in $\mathbb{M}^\tau$. Let $b$ be such a frustrated buyer. Let $b$ sold, in Market $\mathbb{M}^\tau$, $n_b$ items of the Right for the total price $z_b$. Hence, $b$ is willing to buy an additional  number of items of Couple for $z_b$ units of Money. Let $S_b$ be the set of $n_b$ items of Couple containing the items of Right buyer $b$ sold so far in $\mathbb{M}^{\tau+1}$. Buyer $b$ buys the Couple of $S_b$ at an increased price which in turn frees funds of active buyers who may buy back. 

\begin{lemma}
\label{l.7}
Frustration of each buyer is a non-increasing function of the market number. Especially, the non-frustrated (in $\mathbb{M}^\tau$) buyers remain non-frustrated.
\end{lemma} 
\begin{proof}The assertion holds since each frustrated buyer $b$ can buy at most $|S_b|$ additional items of Couple even if the price is not increased from the final price of $\mathbb{M}^\tau$. Hence, $b$ buys Couple only from $S_b$. The non-frustrated buyers may buy back Couple from sets $S_b$ but not beyond these sets since their willingness to pay is the same as in $\mathbb{M}^\tau$: if Couple of all sets $S_b$ is bought back by non-frustrated buyers, then the distribution of Couple is the same as at the end of $\mathbb{M}^\tau$ but the price is higher and thus no non-frustrated buyer is willing to buy more Couple.
\end{proof}

Hence, we only need to rule out the case that the final frustration of each frustrated buyer $b$ is strictly bigger than $1/2$. Let $b$ be a buyer with frustration bigger than $1/2$ in $\mathbb{M}^\tau$. Let $0< n'_b< n_b$ be such that $n_b-n'_b= R_b/2$ where we denote by $R_b$ the number of assigned rights to $b$ in $\mathbb{M}^\tau$ (and thus also in $\mathbb{M}^{\tau+1})$. 

\begin{lemma}When the price reaches double the price of Couple in $\mathbb{M}^\tau$, each frustrated buyer $b$ (1) gains $n'_b$ additional items of Couple and thus has frustration $1/2$, (2) spends all $z_b$ additional items of Money, (3) keeps $R_bz_b/n_b\geq z_b$ items of Money for sold Rights and (4) the non-frustrated buyers are not willing to buy back any of these items of Couple acquired by the frustrated buyers.
\end{lemma}
\begin{proof}
When buying $n'_b$ additional items of Couple of $S$, buyer $b$ only needs to buy items of Good by which $b$ spends all $z_b$ additional items of Money it got from the previous Market $\mathbb{M}^\tau$: \begin{itemize}
\item
$2n'_bz_b/n_b$ items of Money for buying $n'_b$ items of Good from $S_b$, and 
\item
$(n_b-2n'_b)z_b/n_b$ items of Money needed to increase the price of $R_b- n_b= 2(n_b-n'_b)-n_b$ items of Good $b$ already has, since $n_b- n'_b= R_b/2$.
\end{itemize}

(3) follows since $b$ sold $R_b/2$ items of Right and the price is doubled. Hence it remains to show (4). Non-frustrated (in $\mathbb{M}^\tau$) buyers from which $b$ bought new items of Couple are not willing to buy back since: They obtain in total $2n'_bz_b/n_b$ items of Money for the sold $n'_b$ items of Couple of $S_b$, but in order to increase the price further, $2(n_b-n'_b)z_b/n_b$ items of this obtained Money is needed to increase the price of the remaining $(n_b-n'_b)$ items of Couple in $S_b$. Clearly by the definition of $n'_b$ and since $n_b\leq R_b$, $(n_b-n'_b)= R_b/2\geq n'_b$. \end{proof}

Summarizing: the final price of Couple in $\mathbb{M}^{\tau+1}$ is at most double of the price of Couple in $\mathbb{M}^\tau$ and a frustrated buyer has frustration at most $1/2$ in $\mathbb{M}^{\tau+1}$, while the non-frustrated buyers remain non-frustrated in $\mathbb{M}^{\tau+1}$.
\end{proof}

\section{Conclusion} 

We present a multi-round trading environment that combines free market and centralized distribution. The goal is that the equilibrium of the free market is moved towards the centralized solution, which is desirable in times of need. The system assigns the buyers rights, modeled as an additional commodity, representing the amount of goods they are entitled to according to the centralized solution. The traders trade both rights and goods. To evaluate the effectiveness of the redistribution, we use the concept of \enquote{frustration}, which measures the gap between what a buyer obtained and what he was entitled to. We present a polynomial algorithm that approximates the efficient solution of each round of this trading environment. The algorithm is analogous to the Arrow-Debreu model of \cite{nisan2007algorithmic}. 
However, there are several issues in our setting (subadditive utility function, the restriction on the amount of Right and Good in the final baskets) that make the algorithm's analysis more complicated. We study this multi-round environment analytically and show that in the case of myopic buyers following the auction-based algorithm described in the previous section, the frustration of each buyer in the multi-round system is upper bounded by 1/2.

\bibliography{references}% common bib file
%% if required, the content of .bbl file can be included here once bbl is generated
%%\input sn-article.bbl

\end{document}